 \theoremstyle{definition}
  \newtheorem{example}{\protect\examplename}
  \theoremstyle{remark}
  \newtheorem{rem}{\protect\remarkname}
\theoremstyle{plain}
\newtheorem{thm}{\protect\theoremname}
  \theoremstyle{plain}
  \newtheorem{lem}{\protect\lemmaname}
  \theoremstyle{remark}
  \newtheorem{claim}{\protect\claimname}
  \theoremstyle{definition}
  \newtheorem{defn}{\protect\definitionname}
\providecommand{\E}{\mathrm{E}}
\newcommand{\Var}{\mathrm{Var}}
\definecolor{gray-comment}{gray}{0.5}
\theoremstyle{plain}
\newtheorem*{rep@theorem}{\rep@title}
\newcommand{\newreptheorem}[2]{%
\newenvironment{rep#1}[1]{%
 \def\rep@title{#2 \ref{##1}}%
 \begin{rep@theorem}}%
 {\end{rep@theorem}}}
\DeclareMathOperator{\supp}{supp}
\newcommand{\srev}{\textsc{SRev}}
\newcommand{\brev}{\textsc{BRev}}
\newcommand{\rev}{\textsc{Rev}}
\newcommand{\prev}{\textsc{PRev}}
\newcommand{\drev}{\textsc{DRev}}
  \providecommand{\claimname}{Claim}
  \providecommand{\definitionname}{Definition}
  \providecommand{\examplename}{Example}
  \providecommand{\lemmaname}{Lemma}
  \providecommand{\remarkname}{Remark}
\providecommand{\theoremname}{Theorem}
\begin{document}

\title{{\Large{}On the Computational Complexity of Optimal Simple Mechanisms}}

\author{Aviad Rubinstein\thanks{UC Berkeley.
I am grateful to Alon Eden, Amos Fiat, and Muli Safra for inspiring discussions. I also thank Jason Hartline,  Christos Papadimitriou, Paul Tylkin, and anonymous reviewers for helpful comments on previous versions of this manuscript. Most of this research was done while the author was an intern at Microsoft Research New England. Part of this research was supported by NSF grant CCF1408635 and Templeton Foundation grant 3966.}}
\maketitle
\begin{abstract}
We consider a monopolist seller facing a single buyer with additive
valuations over $n$ heterogeneous, independent items. It is known
that in this important setting optimal mechanisms may require randomization
\cite{HartR12}, use menus of infinite size \cite{DDT15-infinite_menu},
and may be computationally intractable \cite{DaskalakisDT14}. This
has sparked recent interest in finding simple mechanisms that obtain
reasonable approximations to the optimal revenue \cite{HartN12,LiY13,BabaioffILW14}.
In this work we attempt to find the {\em optimal simple mechanism}.

There are many ways to define simple mechanisms. Here we restrict
our search to {\em partition mechanisms}, where the seller partitions
the items into disjoint bundles and posts a price for each bundle;
the buyer is allowed to buy any number of bundles. 

We give a PTAS for the problem of finding a revenue-maximizing partition
mechanism, and prove that the problem is strongly \NP-hard. En route,
we prove structural properties of near-optimal partition mechanisms
which may be of independent interest: for example, there always exists
a near-optimal partition mechanism that uses only a constant number
of non-trivial bundles (i.e. bundles with more than one item).

 \thispagestyle{empty}\setcounter{page}{0} \newpage
\end{abstract}

\section{Introduction}

Designing revenue-maximizing mechanisms for a seller who faces a single
buyer with additive valuations over $n$ heterogeneous, independent
items is a fundamental problem in auction theory. It is known that
even in this simple setting, the optimum mechanism requires randomization
\cite{HartR12}, uses a menu of infinite size \cite{DDT15-infinite_menu},
and may be computationally intractable \cite{DaskalakisDT14}. Such
mechanisms are often considered ``impractical'': buyers and sellers
may be reluctant to participate in mechanisms that are too complicated;
randomization may be restricted by legal requirements (and by our
poor understanding of risk aversion); describing and choosing among
infinite menus raises obvious issues of communication and computational
complexity, etc. Put in computer science jargon, {\bf simplicity
is a constraint}: just like the auctioneer cannot obtain the entire
social welfare because of incentive compatibility constraints, the
optimum mechanism's revenue is infeasible because of the simplicity
constraint.

In recent years there have been many works comparing simple mechanisms
to optimal mechanisms (including \cite{HartlineR09,AlaeiFHH13,Yao15,RW15-subadditive,BateniDHS15}).
In particular, a line of works \cite{HartN12,LiY13,BabaioffILW14}
in our setting (a single buyer with additive valuations over independent
items) culminated with a celebrated $1/6$-approximation of the optimal
revenue by the better of the following two mechanisms: (a) sell each
item separately; and (b) auction all the items together as one grand
bundle. Here, rather than comparing to the benchmark of the globally
optimal (but infeasible) auction, we want to find the best feasible
mechanism. Clearly, the above mechanism also obtains a $1/6$-approximation
of the optimal simple mechanism; but can we do better?

\begin{center}{\em 

Can we find the optimal simple mechanism?

}\end{center}

Alas, it is not clear how to formalize ``simple''. In this work we
propose {\em partition mechanisms} as a standard for simplicity.
(In a partition mechanisms the seller partitions the set of items
into disjoint bundles, and posts a price for each bundle; the buyer
is allowed to select any number of bundles.) In Section \ref{sec:Partition-vs-simple}
we discuss some of the reasons that made us choose this definition,
as well as some of its imperfections. We want to emphasize that the
same question could be asked with respect to any definition of ``simple''.
(For example: what is the computational complexity of finding the
optimal deterministic mechanism with polynomial (additive)-menu-size?)

Our technical contributions include a PTAS, i.e. for any constant
$\delta>0$, we give a polynomial time algorithm that finds a partition
mechanism that obtains $\left(1-\delta\right)$-approximation to the
optimal revenue among all partition mechanisms. Rather than developing
novel algorithmic techniques, our main tool is exploring the structural
properties of near-optimal partitions. For example, we prove that
there exists a near-optimal partition mechanism with only a constant
number of non-trivial bundles. We also prove that this problem is
strongly \NP-hard, i.e. there is no FPTAS (assuming $\P\neq\NP$).

\paragraph{Organization}

In Section \ref{sec:Partition-vs-simple} we discuss some of the merits
of partition mechanisms. In Section \ref{sec:Techniques,-intuition,-and}
we give a few interesting examples and provide some intuition for
the technical part. The \NP-hardness result appears in Section \ref{sec:np-hardness}
and the PTAS in Section \ref{sec:PTAS}.

\subsection{Related work}

We briefly discuss a few related works on computational complexity
of designing simple, revenue-(near)-optimal mechanisms in settings
with independent item valuations. For a constant number (or many i.i.d.)
additive buyers with monotone hazard rate (MHR) valuation distributions,
Cai and Huang \cite{CaiH13} give a PTAS to the optimal mechanism.
(Note that we make no assumption on the distributions except independence.)
Cai and Huang's mechanism is simple in the sense that most items are
sold as a single bundle, but for the remaining few items an arbitrary
(potentially randomized) mechanism is used. Our restriction to partition
mechanisms for an additive buyer has an analog of item-pricing mechanisms
for a unit-demand buyer. For the latter, Cai and Daskalakis \cite{CD11-unit_Demand-PTAS}
give a PTAS for monotone hazard rate valuations and a Quasi-PTAS for
regular valuations, and finding the exact optimum for general valuations
is \NP-hard by Chen et al. \cite{ChenDPSY14}.

\section{\label{sec:Partition-vs-simple}Partition mechanisms as simple mechanisms}

While there have been notable attempts to quantify complexity of different
mechanisms (e.g. by Hart and Nisan \cite{HartN13} and recently by
Morgenstern and Roughgarden \cite{MR15-auction_dimension}), it is
fair to say that we have not seen an indisputable, universal definition
of simple mechanisms. Most likely, because ``simple'' {\em can and
should} mean different things in different settings; for example,
compare the simplicity desiderata in the following scenarios: selling
produce in a grocery store (buyers are limited in time and computational
capacity); spectrum auctions (buyers may be limited by legal constraints);
and ad-auctions in an online marketplace (decisions are often made
by automated algorithms).

In this work we define simple mechanisms as partition mechanisms.
Certainly, there are issues with this definition. One immediate problem
with restricting to partition mechanisms is that they don't really
capture all simple mechanisms. In particular, see Example \ref{ex:prev-vs-drev}
for a distribution where a simple, deterministic mechanism that is
not a partition mechanism obtains a strictly greater revenue. More
importantly, some of the advantages of partition mechanisms listed
in this section are restricted to a single buyer, with additive valuations,
over independent items; this issue is illustrated in Example \ref{ex:prev-vs-drev-n_buyers}
which shows that for many buyers with additive valuations over independent
items, partition mechanisms achieve a revenue much lower than the
optimum. Coming up with a canonical definition for simple mechanisms
remains one of the most important open problems in this line of work.
Nevertheless, in this section we argue that partition mechanisms have
many advantages as the standard for simple mechanisms in this particular
setting.

\paragraph{Expressiveness}

We argue that despite their simplicity, partition mechanisms can be
used to express important auctions of interest. For example, they
generalize both selling items separately and bundling all the items
together; thus by \cite{BabaioffILW14} they guarantee at least a
$1/6$-approximation to the optimal revenue achievable with any mechanism.
Furthermore, this is a strong generalization: as we show in Example
\ref{ex:2-gap}, partition mechanisms can obtain as much as double
the revenue obtained by the better of selling items separately or
bundling all the items together. Also, we note partition mechanisms
can exhibit rich structure, as is evident by our \NP-hardness result.

\paragraph{Menu complexity and false-name-proofness}

Hart and Nisan \cite{HartN13} discuss a measure of menu-size complexity:
every truthful mechanism can be represented as a menu of (potentially
randomized) outcomes and prices, where the buyer is allowed to choose
one of those outcomes. As noted by Hart and Nisan, the mechanism which
auctions each item separately has exponential menu-size complexity
under this definition. To overcome this problem, they also introduce
a measure {\em additive-menu-size}, where the buyer is allowed to
buy an arbitrary number of outcomes from the menu. Under this definition,
partition auctions have linear additive-menu-size complexity.

A related issue is that of false-name-proofness, i.e. can a buyer
gain from participating in the mechanism several times? Partition
mechanisms (and additive-menu mechanisms in general) have the advantage
that they are always false-name-proof.

\paragraph{Locality and buyer-side computational complexity}

Partition mechanisms also have the advantage that the buyer's decisions
are ``local'', i.e. the decision to buy one bundle is independent
of the decision to buy other bundles. This greatly simplifies tasks
such as analyzing and reasoning about such mechanisms, learning or
predicting the effects of changes to the environment or the mechanism,
etc. In particular, this makes the buyer's decisions very easy.

\paragraph{Revenue monotonicity}

Hart and Reny \cite{HartR12} observed an interesting phenomenon they
call {\em revenue non-monotonicity}, where increasing the buyer's
valuations (in the sense of stochastic dominance), may strictly decrease
the optimal obtainable revenue. Hart and Reny showed a constant factor
gap between the revenue obtainable with the higher and lower valuations,
even when selling two i.i.d. items. Furthermore, \cite{RW15-subadditive}
recently observed that for two items with correlated valuations, this
gap may be infinite. Another nice property of partition mechanisms
is that the maximum revenue obtainable by auctions in this class is
revenue-monotone.

\section{Preliminaries}

For any distribution $\vec{D}$ of valuations, we use the following
notation, mostly due to \cite{HartN12,BabaioffILW14}, to denote the
optimum revenue for each class of mechanisms:
\begin{itemize}
\item $\rev\left(\vec{D}\right)$ - the maximum revenue among all truthful
mechanisms;
\item $\drev\left(\vec{D}\right)$ - the maximum revenue among all truthful
{\em deterministic} mechanisms;
\item $\prev\left(\vec{D}\right)$ - the maximum revenue among all truthful
{\em partition} mechanisms;
\item $\brev\left(\vec{D}\right)$ - the maximum revenue obtainable by auctioning
the grand bundle; and
\item $\srev\left(\vec{D}\right)$ - the maximum revenue obtainable by pricing
each item separately.
\end{itemize}
When $\vec{D}$ is clear from the context, we simply write $\rev,\drev,$
etc.

\section{\label{sec:Techniques,-intuition,-and}Techniques, intuition, and
examples}

We begin our technical exposition with the following example which
separates the revenue obtainable with a partition mechanism from the
better of pricing each item separately or auctioning the grand bundle.
\begin{example}
[$\prev = (2-o(1)) \max\{\srev,\brev\}$] \label{ex:2-gap}

Consider $2n$ items: 
\begin{itemize}
\item $A$: $n$ items with equal-revenue valuations. $v_{a}\in S\triangleq\left\{ 1,\dots,\sqrt{n}\right\} $,
with distribution $\Pr\left[v_{a}\geq k\right]=1/k\,\,\,\forall k\in S$;
and
\item $B$: $n$ items with rare-event valuations. $v_{b}\in\left\{ 0,\alpha n^{b}\right\} $,
with distribution $\Pr\left[v_{b}=\alpha n^{b}\right]=n^{-b}$ , where
we set $\alpha=\E\left[v_{a}\right]$. 
\end{itemize}
With a partition mechanism, we can obtain expected revenue $\left(1-o\left(1\right)\right)n\alpha$
from the items in $A$ by bundling them together, and also $n\alpha$
from $B$ by selling each item $i$ separately for price $\alpha n^{i}$.
However, selling all the items separately achieves negligible revenue
on $A$, whereas the items in $B$ will have negligible contribution
to the revenue from selling the grand bundle.\end{example}
\begin{rem}
We remark that the Example \ref{ex:2-gap} shows, in particular, a
$\left(2-o\left(1\right)\right)$-gap between $\max\left\{ \srev,\brev\right\} $
and $\rev$. Previously Babaioff et al. \cite{BabaioffILW14} cited
an example due to \cite{DaskalakisDT14} that gave a $1.05$-gap.
\end{rem}
The example above builds on the key intuition from \cite{LiY13,BabaioffILW14}
that there is an interesting tradeoff between bundling and selling
separately: when most revenue is distributed among many low-impact,
high probability events (as in subset $A$), their sum concentrates
and bundling is preferable; when most revenue comes from rare events
(as in subset $B$), we want to sell the items separately. \cite{LiY13,BabaioffILW14}
call this the {\em core-tail} decomposition. 

A nice question suggested to us by Amos Fiat is whether this is the
``only way'' that $\prev$ can beat $\max\left\{ \srev,\brev\right\} $.
In particular, is there always a revenue-maximizing partition mechanism
with at most one non-trivial bundle? The following example shows that
the answer is no.
\begin{example}
[Two non-trivial bundles] \label{ex:2-bundles}

Consider the following valuations: 
\begin{itemize}
\item for $i\in\left\{ 1,2\right\} $, let $\Pr\left[v_{i}=1\right]=\Pr\left[v_{i}=2\right]=1/2$; 
\item for $i=\left\{ 3,4\right\} $, let $\left(1/9\right)\cdot\Pr\left[v_{i}=1\right]=\Pr\left[v_{i}=10\right]=1/10$. 
\end{itemize}
The unique optimal partition mechanism offers bundle $\left\{ 1,2\right\} $
for price $3$ and bundle $\left\{ 3,4\right\} $ for price $11$.
The revenue obtained is 
\begin{gather*}
3\cdot\Pr\left[\sum_{i\in\left\{ 1,2\right\} }v_{i}\geq3\right]+11\cdot\Pr\left[\sum_{i\in\left\{ 3,4\right\} }v_{i}\geq11\right]=3\cdot\frac{3}{4}+11\cdot\frac{19}{100}=4.34.
\end{gather*}

\end{example}
The core-tail intuition from \cite{LiY13,BabaioffILW14} cannot explain
the success of the optimal partition in Example \ref{ex:2-bundles}.
For this distribution, the optimal partition exploits the fact that
the values of the bundles are slightly more likely to come out $3$
and $11$, respectively, than other values on the equal-revenue curve.
But they are still far from concentration around $3$ and $11$. Our
\NP-hardness result constructs gadgets that generalize Example \ref{ex:2-bundles}
to create instances where the optimal partition exhibits an arbitrarily
complex structure.

Our PTAS is more intricate. Let us informally sketch the main idea.
In Example \ref{ex:2-bundles}, something interesting happens at $3$,
and something interesting happens at $11$. In general, many interesting
events can happen in different locations on the (positive) real numbers
line, but one of the following two always holds:
\begin{itemize}
\item The interesting events are far apart on the real line - in this case
we don't lose much by ignoring the events that pertain to lower values.
In terms of Example \ref{ex:2-bundles}, we exploit the asymmetry
between the bundle $\left\{ 1,2\right\} $ and the bundle $\left\{ 3,4\right\} $.
\item Most of the action is restricted to a small interval - this is a redundancy
we can exploit. For example, because the sum of many independent random
variables in the same range should concentrate.
\end{itemize}
More concretely, we prove (Lemmata \ref{lem:singletons} and \ref{lem:bundles})
that there exists a near-optimal partition mechanism that uses only
a constant number of non-trivial bundles. In some sense this is a
bicreteria-approximation variant of Fiat's conjecture that Example
\ref{ex:2-gap} is the only reason we would want to use a partition
other $\max\left\{ \srev,\brev\right\} $. We then build on the same
intuition to construct modified valuations that approximate the original
distributions. Finally, we show that the new distributions admit a
succinct representation, so we can find a near-optimal partition mechanism
by brute-force search.

For completeness, let us conclude this technical exposition with two
examples that separate $\drev$ from $\prev$; they serve to remind
us that there are many interesting mechanisms beyond the scope of
partition mechanisms considered in this paper. The first example shows
a constant separation in our setting of a single additive buyer.
\begin{example}
[Hart and Nisan \cite{HartN12}; $\prev =  (1-\Omega(1))\drev$] \label{ex:prev-vs-drev}

Consider two i.i.d. items with valuations sampled uniformly from $\left\{ 0,1,2\right\} $.
The expected revenue for selling the bundle with both items (for any
price) is at most $1$; and selling each item separately (for any
price) yields total revenue at most $4/3$. Going beyond partition
mechanisms, we can offer either item for price $2$, or the grand
bundle for price $3$. The revenue obtained from this auction is
\begin{gather*}
3\cdot\Pr\left[\sum_{i\in\left\{ 1,2\right\} }v_{i}\geq3\right]+2\cdot\Pr\left[\left\{ v_{1},v_{2}\right\} =\left\{ 2,0\right\} \right]=3\cdot\frac{3}{9}+2\cdot\frac{2}{9}=13/9>4/3.
\end{gather*}

\end{example}
The second example shows that with many buyers, partition mechanisms
cannot achieve any constant fraction of the optimum revenue. See also
the recent paper by Yao \cite{Yao15} on constructing different simple
mechanisms in this setting.
\begin{example}
[e.g. \cite{BabaioffILW14}; Many buyers: $\prev =  o(1)\drev$] \label{ex:prev-vs-drev-n_buyers}

We consider $n$ items and $m=n^{1/4}$ buyers; we let $v_{i}^{j}$
denote buyer $j$'s value for item $i$. All $v_{i}^{j}$ are drawn
i.i.d. from the following distribution: with probability $1-n^{-3/4}$,
$v_{i}^{j}=0$; otherwise, $v_{j}^{i}$ is drawn from an equal-revenue
distribution with support $\left\{ 1,\dots,n^{1/10}\right\} $, i.e.
$\Pr\left[v_{i}^{j}\geq k\right]=n^{-3/4}/k$. 

For any one buyer, selling item $i$ for price $k$ yields revenue
$n^{-3/4}$, which is only an $O\left(1/\log n\right)$-fraction of
the expected value; but the total expected value for the grand bundle
concentrates, so that revenue can easily be obtained. With $m$ buyers,
the auctioneer can guarantee almost the entire social welfare with
the following mechanism: approach buyers in any order; for each buyer
charge slightly lower than her expected value for the remaining items,
and let her choose her favorite $n^{1/4}$ items. With a partition
mechanism, on the other hand, we must fix the partition without knowing
which items each buyer wants. Thus partition mechanisms can guarantee
at most an $O\left(1/\log n\right)$-fraction of the optimum revenue.
\end{example}

\section{\label{sec:np-hardness}\NP-hardness}
\begin{thm}
Given an explicit description of a product distribution of item valuations,
computing a revenue maximizing partition is strongly NP-hard.\end{thm}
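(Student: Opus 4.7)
The plan is to reduce from 3-PARTITION: given $3m$ positive integers $a_{1},\dots,a_{3m}$ with each $a_{i}\in(B/4,B/2)$ and $\sum_{i}a_{i}=mB$, decide whether the integers can be partitioned into $m$ triples each summing to $B$. 3-PARTITION is strongly $\NP$-hard, so $B$ may be given in unary, which is precisely what is needed for strong $\NP$-hardness of the partition-mechanism problem. From the input I would construct one item per integer $a_{i}$, whose valuation distribution is a gadget generalizing Example~\ref{ex:2-bundles}. The guiding intuition, already highlighted in the discussion after that example, is that a bundle's revenue is maximized when the distribution of its value sum is tightly concentrated on a ``sweet spot'' $T$; the revenue-optimal price then equals $T$ and the seller collects almost all of the mass at or above $T$. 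Thus, by encoding each $a_{i}$ into an item whose most-likely value is an affine image $f(a_{i})$, a bundle of three items hits a canonical sweet spot $f(B)+c$ if and only if $a_{i_{1}}+a_{i_{2}}+a_{i_{3}}=B$.

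Concretely, I would let each item's valuation be a two- or three-point mixture, with a dominant low mode placing mass near $f(a_{i})$ and a low-probability high tail tuned so that (i) the marginal revenue from selling the item alone is small; (ii) the revenue of any bundle of size $\neq 3$ is strictly less than what a correct triple obtains, because the sum either fails to concentrate (too many items) or the sweet-spot price is too small (singletons or pairs); and (iii) among size-3 bundles, only those whose $a$-values sum to $B$ align with the canonical sweet-spot price, so any other triple loses a definite amount of revenue relative to a valid one. Summed across the $m$ triples, any violation of the 3-PARTITION constraint causes a polynomial-in-$m$ revenue loss, giving a separation between YES- and NO-instances large enough for the reduction to be robust against the seller's freedom to choose arbitrary real prices.

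The steps I would carry out in order are: (1) fix the gadget and compute an explicit target $R^{*}$ achieved by the partition mechanism induced by a valid 3-PARTITION; (2) show completeness, namely that any valid 3-PARTITION yields a partition mechanism with revenue at least $R^{*}$; (3) show soundness by a case analysis over partition structures---singletons, pairs, triples with wrong sum, and larger bundles---establishing in each case a revenue upper bound strictly below $R^{*}$; and (4) verify that the constructed distributions have polynomial bit-complexity in the unary-encoded input, so the hardness is strong.

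The main obstacle is step (3): the seller's freedom to optimize prices continuously means a ``wrong'' triple might compensate by pricing below the canonical sweet spot and extracting more mass from the body of its value distribution. The gadget must therefore be engineered so that, for a triple with $a_{i_{1}}+a_{i_{2}}+a_{i_{3}}=B$, the sweet-spot revenue beats every alternative price by a uniform margin, and simultaneously so that the revenues from selling items singly or in pairs cannot accumulate across $m$ groups to overtake $R^{*}$. This requires careful calibration of the per-item mixture---much as in Example~\ref{ex:2-bundles}, where the sweet spots at $3$ and $11$ each sit on a uniquely advantageous mass configuration---and this calibration, rather than any sophisticated reduction framework, is where the bulk of the technical work will go.
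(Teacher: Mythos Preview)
Your plan differs from the paper's proof in the choice of source problem: the paper reduces from \textsc{3D-Matching}, not \textsc{3-Partition}. In the paper's construction the items are the vertices of $X\cup Y\cup Z$; each hyperedge $h$ is assigned a distinct large value $\pi_h$, and item $i$'s distribution places mass $1/\pi_h$ at $\pi_h$ for every hyperedge $h\ni i$ (and the remaining mass at $1$). The bundling bonus comes from \emph{alignment}: the three items of a hyperedge $h$ share the same rare high value $\pi_h$, so at price $\pi_h+2$ the bundle sells whenever \emph{any} of the three is high, giving probability $\approx 3/\pi_h$ and revenue $\approx 3+3/\pi_h$, versus only $3$ from selling separately. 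Items not lying in a common hyperedge have misaligned high values, so no single price captures the ``$3\times$'' multiplier, and the soundness argument shows such bundles are strictly worse.

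Your \textsc{3-Partition} route faces a structural obstacle that you have not confronted. If item $i$'s distribution encodes only $a_i$ (via a mode at $f(a_i)$ and a tail), then for any of the natural gadgets the optimal revenue of a size-$3$ bundle turns out to depend on $a_{i_1}+a_{i_2}+a_{i_3}$ in a way that is either exactly linear or linear to leading order. Since $\sum_T (a_{i_1}+a_{i_2}+a_{i_3})=mB$ for \emph{every} partition into triples, the dominant contribution to total revenue is invariant across partitions, and the gap between yes- and no-instances vanishes. For instance, with the equal-revenue-style gadget ``value $1$ w.p.\ $1-a_i/K$, value $K$ w.p.\ $a_i/K$'', the bundling bonus of a triple with sum $S$ is $(2S-P)/K+O(1/K^2)$ where $P$ is the sum of pairwise products; with $a_i\in(B/4,B/2)$ and $B$ moderately large this bonus is \emph{negative}, so bundling is worse than selling separately. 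Making the revenue of a triple strictly concave in $S$ with a peak at $S=B$ --- which is what your ``canonical sweet spot'' requires --- is exactly the missing idea, and nothing in the proposal indicates how to achieve it with independent per-item distributions. The paper's reduction avoids this entirely because \textsc{3D-Matching} is a combinatorial \emph{alignment} problem, and alignment of rare high values is precisely what creates a bundling bonus; \textsc{3-Partition} is an \emph{arithmetic} problem, and the arithmetic does not translate into a bundling advantage in any obvious way.
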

\begin{proof}
We reduce from 3D-Matching: Given sets $X,Y,Z$ and a set of hyperedges
$H\subseteq X\times Y\times Z$, find a maximum 3-dimensional matching,
i.e. maximum non-intersecting subset $M\subseteq H$. Karp \cite{Karp72-NP}
proved that it is \NP-complete to decide whether there exists a perfect
matching (i.e. $\left|M\right|=\left|X\right|=\left|Y\right|=\left|Z\right|$).

\paragraph{Construction}

Identify the set of items with the set of vertices $I\triangleq X\cup Y\cup Z$.
Identify the set of hyperedges with their indices; for each $h\in H$,
let $\pi_{h}\triangleq\left|H\right|^{6}+\left|H\right|^{3}\cdot h$.
Let $\pi_{\min}\triangleq\min_{h\in H}\pi_{h}=\left|H^{6}\right|$
and $\pi_{\max}\triangleq\max_{h\in H}\pi_{h}=\left(1+O\left(1/\left|H\right|^{2}\right)\right)\left|H^{6}\right|$,
For each item $i$, the distribution of valuations $D_{i}$ is defined
by the $\pi_{h}$'s of its hyperedges. Specifically, we let 
\[
\supp\left(D_{i}\right)=\left\{ 1\right\} \cup\left\{ \pi_{h}\colon h\ni i\right\} \,\,\,\,\,\,\,\,\,\mbox{and}\,\,\,\,\,\,\,\,\,\Pr_{v_{i}\sim D_{i}}\left[v_{i}\geq\pi_{h}\right]=1/\pi_{h}.
\]

Observe that selling each item separately, for any price in its support,
yields expected revenue of $1$ per item.

\paragraph{Completeness}

If there exists a perfect matching $M$, we partition according to
this matching, and set the price for bundle $h$ at $\pi_{h}+2$.
For each bundle, we have 
\[
\Pr\left[\sum_{i\in h}v_{i}\geq\pi_{h}+2\right]=1-\left(1-1/\pi_{h}\right)^{3}=3/\pi_{h}-3/\pi_{h}^{2}+1/\pi_{h}^{3}.
\]
The expected revenue for each bundle is therefore
\[
\left(\pi_{h}+2\right)\left(3/\pi_{h}-3/\pi_{h}^{2}+1/\pi_{h}^{3}\right)=3+3/\pi_{h}-O\left(1/\pi_{h}^{2}\right)
\]
Summing over $\left|M\right|$ hyperedges (i.e. $\left|M\right|$
bundles), we guarantee a total revenue of $OPT\triangleq\left|M\right|\left(3+3/\pi_{\max}-O\left(1/\pi_{\max}^{2}\right)\right)$.

\paragraph{Soundness}

We first claim that there exists an optimum partition where every
bundle is contained in a hyperedge. Let $B$ be a bundle sold for
some price $\pi\geq\pi_{\min}$. Clearly $\pi\leq\left|B\right|\pi_{\max}$,
otherwise it never sells. Similarly, we have $\pi\leq\left|B\right|+\pi_{\max}$,
otherwise it sells with probability at most $\left|B\right|/\pi_{\min}^{2}$,
yielding revenue $\left|B\right|\pi/\pi_{\min}^{2}\ll1$.

Let $i\in B$ be such that $\pi\notin\left[\pi_{h}-\left|H\right|^{2},\pi_{h}+\left|H\right|^{2}\right]$
for all $h\ni i$. We compare the revenue from selling $B$ for price
$\pi$ to the revenue from selling $B\setminus\left\{ i\right\} $
for price $\pi-1$. If $B$ sells for price $\pi$ but $B\setminus\left\{ i\right\} $
does not sell for price $\pi-1$, then at least one of the following
must be true: (1) $v_{i}\geq\pi-\left|B\right|$, which by our assumption
on $i$ implies $v_{i}\geq\pi+\left|H\right|^{2}$; or (2) $v_{i}>1$
and there is some other $j\in B$ such that $v_{j}>1$. We bound the
probability of the union as follows:
\begin{enumerate}
\item $\Pr\left[v_{i}\geq\pi+\left|H\right|^{2}\right]\leq\frac{1}{\pi+\left|H\right|^{2}}$;
the revenue loss is bounded by $\frac{\pi}{\pi+\left|H\right|^{2}}\leq\frac{\pi-\left|H\right|^{2}/2}{\pi}$;
\item $\Pr\left[\left(v_{i}>1\right)\wedge\left(\exists j\in B\,\,\, v_{j}>1\right)\right]<\frac{\left|B\right|}{\pi_{\min}^{2}}$;
the revenue loss is bounded by $\frac{\pi\left|B\right|}{\pi_{\min}^{2}}\leq\frac{2\left|B\right|}{\pi}$. 
\end{enumerate}
There is also some revenue loss from the decrease in price: the original
bundle sells with probability at most $\frac{\left|B\right|}{\pi-\left|B\right|}\leq\frac{2\left|B\right|}{\pi}$;
since we decrease the price by $1$, $\frac{2\left|B\right|}{\pi}$
also bounds the expected loss in revenue. The total expected loss
in revenue is therefore at most $\frac{\pi-\left|H\right|^{2}/2}{\pi}+\frac{4\left|B\right|}{\pi}<1$,
so selling $B$ as a bundle cannot be optimal.

There is an optimum partition that bundles items according to hyperedges
in some partial matching $M'$, and the rest of the items are in bundles
of size at most two. The optimal price for a bundle of two items from
hyperedge $h$ is $\pi_{h}+1$; the probability of selling for this
price is $1-\left(1-1/\pi_{h}\right)^{2}=2/\pi_{h}-\pi_{h}^{2}$.
Multiplying by $\pi_{h}+1$, we get an expected revenue of $2+1/\pi_{h}$.
In particular, this is only $\left(1+1/2\pi_{h}\right)$ per item,
as opposed to $\left(1+1/\pi_{h}\right)$ per item with a full hyperedge. 
\end{proof}

\section{\label{sec:PTAS}PTAS}
\begin{thm}
For any constant $\delta>0$, there exists a deterministic polynomial
time algorithm that, given an explicit description of a product distribution
of item valuations, computes a partition and prices that generate
a $\left(1-\delta\right)$-approximation to the maximum revenue obtainable
by partition mechanisms.
\end{thm}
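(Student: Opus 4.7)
The plan is to combine a structural reduction with a discretization of item distributions, so that a near-optimal partition mechanism lives in an enumerable family that can be searched in polynomial time. The first step is to invoke the structural results alluded to above (Lemmata \ref{lem:singletons} and \ref{lem:bundles}) to restrict attention to partition mechanisms with at most $k=k(\delta)=O(1)$ non-trivial bundles, with all remaining items sold as singletons. Since singleton prices decouple and can be optimized independently per item, the problem reduces to choosing $O(1)$ non-trivial bundles together with their posted prices.

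The second step is to coarsen each item's distribution $D_{i}$ to a canonical form $\tilde D_{i}$: truncate the negligible tails, bucket the surviving values on a geometric grid $\{(1+\epsilon)^{j}\}$, and round probabilities. The coarsening should be calibrated so that the revenue of every partition mechanism with at most $k$ non-trivial bundles is preserved up to a factor $(1-\delta)$, and so that each item is characterized by a type drawn from a small universe of canonical distributions. The third step is brute-force enumeration: items of the same canonical type are interchangeable inside any bundle, so a non-trivial bundle is described by a multiset of type-counts rather than by an arbitrary subset of $[n]$, and for each candidate partition the optimal bundle price is found by a discretized grid search. The revenue of each candidate mechanism can be evaluated by convolving the succinctly represented per-item distributions within each bundle.

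The main obstacle will be step two, i.e.\ making the coarsening precise while still preserving the revenue of the optimal $k$-bundle partition mechanism. The subtlety is that rounding errors in individual item valuations can compound when many items are summed inside a bundle, so the argument must exploit the two-case dichotomy from Section \ref{sec:Techniques,-intuition,-and}: either the interesting events contributed by distinct items inside a bundle are well separated on the real line, in which case each contributes essentially independently and rounding one does not interact with the others, or they are confined to a small interval, in which case concentration of the bundle's value makes the precise distribution of individual summands nearly irrelevant. Making this dichotomy quantitative, and combining it with the $k$-bundle structural reduction so that the total error budget across all $O(1)$ non-trivial bundles stays below $\delta\cdot\prev$, is where the real work lies; the final enumeration and discretized pricing are essentially routine once the distributional reduction is in place.
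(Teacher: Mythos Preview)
Your three-phase plan (structural reduction via Lemmata~\ref{lem:singletons} and~\ref{lem:bundles}, discretization to canonical types, enumeration) matches the paper's architecture, but as written it delivers only a QPTAS. The structural lemmas give more than ``$O(1)$ non-trivial bundles'': they pin all non-trivial bundle prices to a single interval $[\eta\pi^*,\pi^*]$ of constant multiplicative width, with each bundle selling with probability $\Omega(1)$. The paper's algorithm first enumerates over the scale $\pi^*$, and every subsequent discretization step is done relative to that scale; you omit this, so ``truncate the negligible tails'' has no reference point.

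The more serious gap is the size of your ``small universe of canonical distributions.'' Naive geometric bucketing of values in $[\pi^*/n^2,\pi^*]$ and probabilities in $[1/n^2,1]$ leaves $O(\log n)$ support points each with $O(\log n)$ possible probabilities, hence $(\log n)^{\Theta(\log n)}$ canonical types---the paper explicitly flags this as the QPTAS barrier (Subsection~\ref{sub:Discretization}). Getting down to $O(\log n)$ types requires three further reductions you do not mention: replace all values below $\epsilon\pi^*$ by their conditional expectation (Lemma~\ref{lem:low-values}, justified by Chebyshev); among values in $[\epsilon\pi^*,\pi^*]$, drop those whose probability is far below the item's maximum probability $p_i^*$ in that range (Lemma~\ref{lem:high-values}, justified by comparing the loss to $\srev(D_i)$); and finally zero any of the three surviving scalars $\bigl(\E[u\mid u<\epsilon\pi^*],\,p_i^*\pi^*,\,\srev(D_i)\bigr)$ that is much smaller than the largest. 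These arguments are not the ``spread-out vs.\ concentrated'' dichotomy you invoke---that intuition drives the \emph{structural} lemmas, not the discretization. Lastly, even with $O(\log n)$ types, enumerating exact type-count multisets over $k=O(1)$ bundles costs $n^{\Theta(k\log n)}$; the paper instead discretizes the \emph{fraction} of each type sent to each bundle to a grid of step $\epsilon$, giving $\epsilon^{-O(k)}$ choices per type and polynomial time overall.
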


\paragraph{Proof outline}

In the next two subsections we prove a structural characterization
of near optimal auctions: Lemmata \ref{lem:singletons} and \ref{lem:bundles}
imply that there exists a near-optimal partition mechanism that uses
only a constant number of non-trivial bundles (i.e. bundles with more
than one item). Furthermore, the prices to these bundles are all within
a constant factor, and all these bundles sell for these prices with
constant probability.

In Subsection \ref{sub:Discretization} we use our insight about the
structure of near-optimal auctions to show that for optimizing over
this restricted class of partitions, most of the information in the
distribution is redundant. In particular we can place every item in
one of $O\left(\log n\right)$ buckets, where the items within each
bucket are indistinguishable for the algorithm. For each bucket, there
are constantly-many options to approximately partition the identical
items among constantly many bundles (or to be sold separately). We
can enumerate over all approximate partitions for all buckets in polynomial
time. 

See also description of the algorithm in Subsection \ref{sub:Algorithm}.

\subsection{Singletons}

Given the following lemma, we can assume wlog that every non-trivial
bundle sells with probability at least $\epsilon$.
\begin{lem}
\label{lem:singletons}For any $\delta>0$, let $\epsilon\leq\delta^{3}/4$.
Let $B\subseteq\left[n\right]$ be a bundle of items, and let $\pi_{B}\in\mathbb{R}_{+}$
be an optimal price for $B$. Suppose that the revenue from auctioning
$B$ for price $\pi_{B}$ is $\epsilon\cdot\pi_{B}$, i.e. $\Pr\left[\sum_{i\in B}v_{i}\geq\pi_{B}\right]=\epsilon$.
Then the revenue from selling the items in $B$ separately is at least
$\left(1-\delta\right)\cdot\epsilon\cdot\pi_{B}$.\end{lem}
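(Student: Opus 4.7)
The plan is by contradiction: assume $\srev < (1-\delta)\epsilon\pi_B$, and I will derive $\Pr[\sum_{i\in B} v_i\geq \pi_B] < \epsilon$, contradicting the hypothesis. The guiding intuition is that when the bundle sells only rarely ($\epsilon$ small), this rare event must mostly be driven by a single ``tail'' item whose value alone exceeds $\pi_B$; pricing each item separately at $\pi_B$ should therefore capture almost all of the bundle's revenue.

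Concretely, consider the strategy of posting price $\pi_B$ on every item in $B$; its revenue equals $\pi_B\sum_{i\in B}\Pr[v_i\geq\pi_B]$ and is a lower bound on $\srev$. Under the contradictory assumption this forces $\sum_i\Pr[v_i\geq\pi_B] < (1-\delta)\epsilon$, so by the union bound $\Pr[\exists i:\,v_i\geq\pi_B] < (1-\delta)\epsilon$. Consequently the ``residual'' event $F := \{\sum_{i\in B}v_i\geq\pi_B\}\cap\{\forall i:\,v_i<\pi_B\}$ satisfies $\Pr[F] > \delta\epsilon$, and the task becomes to show $\Pr[F] \leq \delta\epsilon$.

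On $F$ the sum reaches $\pi_B$ although no single item does, so the mass must come from several moderately large items. For any threshold $t>0$ the srev assumption gives $\sum_i\Pr[v_i\geq t] \leq \srev/t < (1-\delta)\epsilon\pi_B/t$. Setting $t=\delta\pi_B$ and exploiting independence, a pair-wise union bound gives
\[
\Pr\bigl[\exists\, i\neq j :\ v_i,v_j \geq \delta\pi_B\bigr]\leq \sum_{i<j}\Pr[v_i\geq\delta\pi_B]\Pr[v_j\geq\delta\pi_B]\leq \frac{1}{2}\Bigl(\sum_i\Pr[v_i\geq\delta\pi_B]\Bigr)^2 < \frac{(1-\delta)^2\epsilon^2}{2\delta^2},
\]
which under $\epsilon\leq\delta^3/4$ is at most $\delta\epsilon/8$.

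The remaining contribution to $\Pr[F]$ --- the subcase in which at most one item has $v_i\geq\delta\pi_B$ --- will be handled by a concentration argument applied to the truncated sum $\sum_i v_i\cdot\mathbf{1}_{v_i<\delta\pi_B}$: the srev bound controls its expected value, and since each summand is at most $\delta\pi_B$ a Bernstein/Chernoff-style inequality shows that this truncated sum only rarely attains the large value necessary to realise $F$ in this subcase. Adding the two contributions yields $\Pr[F] < \delta\epsilon$, the desired contradiction. The hard part will be making this concentration step tight enough to remain \emph{independent of $|B|$} and to match the cubic dependence $\epsilon\leq\delta^3/4$; a na\"{\i}ve Markov-style bound loses logarithmic factors in $|B|$, so care is needed in choosing the truncation threshold and in applying a sharp concentration inequality.
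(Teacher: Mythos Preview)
Your outline has a real gap in the ``concentration step'' for Case~2, and the root cause is that you never invoke the hypothesis that $\pi_B$ is an \emph{optimal} price for the bundle.  From the contrapositive assumption $\srev<(1-\delta)\epsilon\pi_B$ you only get control on sums of \emph{individual} tails, namely $\sum_i\Pr[v_i\ge t]<\epsilon\pi_B/t$.  This is not enough to bound the mean of the truncated sum $S=\sum_i v_i\mathbf 1_{v_i<\delta\pi_B}$ independently of $|B|$: integrating the bound $\epsilon\pi_B/t$ from $0$ to $\delta\pi_B$ diverges, and combining with the trivial bound $\sum_i\Pr[v_i\ge t]\le |B|$ only yields $\E[S]=O(\epsilon\pi_B\log|B|)$.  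Once the mean is allowed to be of order $\pi_B$ (take $|B|$ items with equal-revenue-type marginals), no Bernstein/Chernoff inequality can make $\Pr[S\ge c\pi_B]$ small; the deviation you need is no longer large relative to the mean.  There is also a secondary issue with your decomposition: on $F$ with exactly one item in $[\delta\pi_B,\pi_B)$, the truncated sum need only exceed $\pi_B-v_j$, which can be arbitrarily small, so ``$S$ large'' does not cover this sub-case. (This part is fixable by pricing at $(1-\delta/2)\pi_B$ instead of $\pi_B$ in the first step, but the concentration obstacle remains.)

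The paper's proof uses optimality in exactly the place where your argument stalls.  Because $\pi_B$ maximises $\pi\Pr[\sum_{i\in B}v_i\ge\pi]$, one gets a tail bound on the \emph{sum}: $\Pr[\sum_{i\in B}v_i\ge \delta\pi_B/4]\le 4\epsilon/\delta$.  The paper then replaces your concentration step by a symmetrisation trick: for any fixed (or random) split $B=U\cup V$, independence plus the sum-tail bound give $\Pr[\sum_{U}v_i\ge\delta\pi_B/4\ \wedge\ \sum_{V}v_i\ge\delta\pi_B/4]\le(4\epsilon/\delta)^2$; an averaging argument upgrades this to a bound on the probability that \emph{some} split has both sides $\ge\delta\pi_B/2$.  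Since $\{\sum v_i\ge\pi_B,\ \max v_i<(1-\delta/2)\pi_B\}$ forces such a split to exist, this yields $\Pr[\max v_i<(1-\delta/2)\pi_B\mid \sum v_i\ge\pi_B]\le 32\epsilon/\delta^2$, which is where the cubic relation $\epsilon\le\delta^3/4$ enters.  To repair your argument you must bring in the optimality hypothesis to control the distribution of the sum at scales below $\pi_B$; the $\srev$ assumption alone will not close the gap.
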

\begin{proof}
For the proof of this lemma, we simplify notation by normalizing to
$\pi_{B}=1$.

Below we prove that most of the revenue comes from the item with the
highest value (this may be a different item in each realization).
In particular, if the total value of the bundle is at least $1$,
then it is likely that there is a single item whose value is almost
$1$,
\begin{equation}
\Pr\left[\max_{i\in B}v_{i}\geq1-\delta/2\mid\sum_{i\in B}v_{i}\geq1\right]\geq1-\delta/2.\label{eq:sum-max}
\end{equation}
This means in particular that $\Pr\left[\max_{i\in B}v_{i}\geq1-\delta/2\right]\geq\left(1-\delta/2\right)\epsilon$,
and therefore selling each item separately for price $\left(1-\delta/2\right)$
guarantees a $\left(1-\delta/2\right)^{2}\geq\left(1-\delta\right)$-fraction
of the revenue from selling $B$ as a bundle.

We now prove (\ref{eq:sum-max}). Since $1$ is an optimal price for
$B$, we have that the $\Pr\left[\sum_{i\in B}v_{i}\geq\delta/4\right]\leq\epsilon/\left(\delta/4\right)$,
otherwise $\delta/4$ would have been a better price. What is the
probability that there exist a partition $B=S\cup T$ such that $\sum_{i\in S}v_{i}\geq\delta/2$
and $\sum_{i\in T}v_{i}\geq\delta/2$? If we were to fix any partition
$B=U\cup V$ before observing the realizations, or to pick one uniformly
at random, we would have 
\[
\Pr\left[\left(\sum_{i\in U}v_{i}\geq\delta/4\right)\wedge\left(\sum_{i\in V}v_{i}\geq\delta/4\right)\right]\leq\left(\epsilon/\left(\delta/4\right)\right)^{2}.
\]
Now assume that there exist some partition $\left(S,T\right)$ as
above, and pick $\left(U,V\right)$ uniformly at random. With probability
at least $1/4$ we have that $\sum_{i\in\left(S\cap V\right)}v_{i}\geq\sum_{i\in\left(S\cap U\right)}v_{i}$
and $\sum_{i\in\left(T\cap V\right)}v_{i}\leq\sum_{i\in\left(T\cap U\right)}v_{i}$;
and the same for the event that $\sum_{i\in\left(S\cap V\right)}v_{i}\geq\sum_{i\in\left(S\cap U\right)}v_{i}$
and $\sum_{i\in\left(T\cap V\right)}v_{i}\leq\sum_{i\in\left(T\cap U\right)}v_{i}$.
Thus with probability at least $1/2$, 
\[
\min\left\{ \sum_{i\in U}v_{i},\sum_{i\in V}v_{i}\right\} \geq\min\left\{ \sum_{i\in S}v_{i},\sum_{i\in T}v_{i}\right\} /2.
\]
Therefore the probability that there exist such $\left(S,T\right)$
is at most $2\left(\epsilon/\left(\delta/4\right)\right)^{2}$.

Observe that whenever $\sum_{i\in B}v_{i}\geq1$ and $\max_{i\in B}v_{i}<1-\delta/2$,
there exists a partition $\left(S,T\right)$ as above: Let $S=\left\{ \arg\max_{i\in B}v_{i}\right\} $
and $T=B\setminus S$; if $\max_{i\in B}v_{i}\geq\delta/2$, we're
done. Otherwise, move items from $T$ to $S$ until $\sum_{i\in S}v_{i}\geq\delta/2$;
since the last item we moved from $S$ to $T$ had value at most $\max_{i\in B}v_{i}<\delta/2$,
we have $\sum_{i\in T}v_{i}\geq1-\delta>\delta/2$. Therefore, 
\[
\Pr\left[\max_{i\in B}v_{i}<1-\delta/2\mid\sum_{i\in B}v_{i}\geq1\right]\leq\frac{\Pr\left[\left(\max_{i\in B}v_{i}<1-\delta/2\right)\wedge\left(\sum_{i\in B}v_{i}\geq1\right)\right]}{\Pr\left[\sum_{i\in B}v_{i}\geq1\right]}\leq\frac{32\epsilon}{\delta^{2}}.
\]
Plugging in $\delta=4\epsilon^{1/3}$ yields (\ref{eq:sum-max}). 
\end{proof}

\subsection{Bundles}
\begin{lem}
\label{lem:bundles}For any constants $0<\epsilon\leq1/2$ and $\delta>0$
we can replace all the bundles that sell with probability at least
$\epsilon$ with $\ell=\poly\left(1/\epsilon,1/\delta\right)$ bundles,
while maintaining a $\left(1-2\delta\right)$-fraction of the expected
revenue.\end{lem}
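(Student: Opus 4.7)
The plan has two main steps: first, merging bundles with roughly equal prices via a Chernoff-type concentration, and second, arguing that only a $\poly(1/\epsilon,1/\delta)$ number of price scales need to be retained.

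\emph{Step 1 (within-band merging).} Normalize $\pi^* := \max_B \pi_B$ and partition the bundles (all with $q_B \geq \epsilon$) into geometric bands $I_j := ((1+\delta)^{-j-1}\pi^*, (1+\delta)^{-j}\pi^*]$; let $m_j$ and $\mu_j := \sum_{B \in I_j} q_B \geq \epsilon m_j$ denote the number and expected sales in $I_j$. If $m_j < M := \Theta(\log(1/\delta)/(\epsilon\delta^2))$, retain the individual bundles; otherwise merge the band into a single super-bundle $S_j := \bigcup_{B \in I_j} B$ and price it at $\pi_{S_j} := (1-\delta)\mu_j(1+\delta)^{-j-1}\pi^*$. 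Since items in distinct bundles are independent, the sale-indicators $\mathbb{1}[v(B) \geq \pi_B]$ for $B \in I_j$ are independent Bernoullis with mean $\geq \epsilon$, so by a multiplicative Chernoff bound their sum $N_j$ satisfies $N_j \geq (1-\delta)\mu_j$ with probability $\geq 1-\delta$ (this is why we need $\mu_j \gtrsim \log(1/\delta)/\delta^2$, i.e., $m_j \geq M$). On that event $v(S_j) \geq N_j(1+\delta)^{-j-1}\pi^* \geq \pi_{S_j}$, so $S_j$ sells. The merged revenue is at least $(1-\delta)^2/(1+\delta)$ times the band's original revenue $R_j$, costing an $O(\delta)$-fraction of $R_j$.

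\emph{Step 2 (price localization).} The goal is to reduce to $\poly(1/\epsilon,1/\delta)$ surviving bands. The starting observation is that the probability constraint $q_B \geq \epsilon$ yields a global price-sum bound $\sum_B \pi_B \leq \tfrac{1}{\epsilon}\sum_B \pi_B q_B = R/\epsilon$, while the top bundle alone gives $R \geq \epsilon\pi^*$. A pigeonhole over $O(1/\delta)$ coarse ``thick'' scales $J_t := ((\delta\epsilon)^{t+1}\pi^*, (\delta\epsilon)^t\pi^*]$ produces a scale with revenue at most $\delta R$; combining this with the price-sum bound to control the tail below the chosen cut-off leaves $O(\log(1/(\delta\epsilon))/\delta)$ retained bands, each contributing at most $M$ bundles, for a total of $\ell = \poly(1/\epsilon,1/\delta)$ bundles. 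Summing the losses from Steps 1 and 2 and reparametrizing $\delta$ by a constant factor yields the $(1-2\delta)$ guarantee.

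The most delicate step is Step 2: every individual low-price bundle contributes little revenue, but a priori there could be $\Omega(n)$ of them, so a naive price-threshold argument loses an $n$-dependent amount. The non-trivial ingredient is that the aggregate constraint $\sum_B \pi_B \leq R/\epsilon$ (which crucially uses $q_B \geq \epsilon$) combined with a pigeonhole over the geometrically spaced thick scales lets us pick a cut-off such that both the thick scale itself and the cumulative revenue below are $O(\delta R)$, independent of $n$.
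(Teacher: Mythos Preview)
Your overall two-step strategy---merge crowded price bands via concentration, then truncate low-price bundles---matches the paper's proof, and your Step~1 is correct. The paper uses Chebyshev on the truncated bundle values $\hat v_{B_i}=\min\{v_{B_i},\pi_i\}$ rather than Chernoff on sale indicators, and it recurses explicitly (since a merged super-bundle may land in a higher interval and need re-merging), but these are minor differences.

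Step~2, however, has a gap. Pigeonholing over $O(1/\delta)$ thick scales finds one scale $J_{t^*}$ with revenue at most $\delta R$, but you still must bound the \emph{cumulative} revenue from all scales $t>t^*$. You invoke the price-sum bound $\sum_B \pi_B \le R/\epsilon$ for this, but that inequality is global and does not separate head from tail: the entire $R/\epsilon$ budget could sit below any fixed cut-off, so no bound on the tail revenue follows from it. Your final paragraph asserts that the combination ``lets us pick a cut-off such that both the thick scale itself and the cumulative revenue below are $O(\delta R)$,'' but no mechanism is given that delivers the second half of this claim.

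The paper's Step~2 is both simpler and avoids this issue. Once Step~1 guarantees at most $k$ bundles per multiplicative interval $[\epsilon\pi,\pi]$, the total revenue from all intervals below $\eta\pi^*$ is at most $k\eta\pi^*/(1-\epsilon)$ by a geometric sum; since the top bundle already gives $R\ge\epsilon\pi^*$, choosing $\eta=\poly(\delta,\epsilon)$ makes the tail at most $\delta R$. Neither pigeonhole nor the price-sum bound is needed---the per-interval sparsity from Step~1 together with the geometric decay of prices does all the work. The same fix applies verbatim to your thin bands: the tail beyond band $j_0$ contributes at most $O\bigl(M(1+\delta)^{-j_0}\pi^*/\delta\bigr)$, so taking $j_0=O\bigl(\log(M/(\delta^2\epsilon))/\delta\bigr)$ suffices.
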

\begin{proof}
In Claim \ref{cla:few-bundles} we show that we can recursively combine
bundles until in any interval of multiplicative-constant-length $\left[\epsilon\cdot\pi,\pi\right]$,
there is at most a constant ($k=8\epsilon^{-4}\delta^{-3}$) number
of bundles. Then, in Claim \ref{cla:constant-range} we show that
we can ignore all bundles except those in some slightly larger interval
$\left[\eta\cdot\pi,\pi\right]$ (for $\eta=\delta^{4}\epsilon^{5}\left(1-\epsilon\right)$).
This is a union of $\log_{\epsilon}\eta\leq\frac{\log\delta^{4}\epsilon^{4}\left(1-\epsilon\right)}{\log\epsilon}$
smaller intervals $\left[\epsilon\cdot\pi,\pi\right]$; together with
the sparsity we obtained in Claim \ref{cla:few-bundles}, this implies
that we are left with at most $\ell=k\log_{\epsilon}\eta$ bundles. \end{proof}
\begin{claim}
\label{cla:few-bundles}For any $\epsilon,\delta>0$, let $k=8\epsilon^{-4}\delta^{-3}$.
Consider only bundles $B_{i}$ that sell for price $\pi_{i}$ with
probability at least $\epsilon$. Partition the positive reals into
multiplicative intervals $\left[\epsilon\cdot\pi,\pi\right]$. Consider
$k$ separate bundles $B_{1},\dots,B_{k}$ with prices $\pi_{1},\dots,\pi_{k}$
in the same interval $\left[\epsilon\cdot\pi,\pi\right]$, and associated
probabilities of selling $p_{1},\dots,p_{k}$. Whenever we encounter
such a $k$-tuple, we combine them into one bundle $B'$ with price
$\pi'=\sum_{i}p_{i}\cdot\pi_{i}$ and probability $p'=1$. Recurse
until every interval has at most $k$ bundles. (The number of bundles
decreases at each step, so this process is guaranteed to terminate.)
Finally, discount all newly formed bundles by a factor of $\left(1-\delta/2\right)$. 

Then every newly formed bundle sells with probability at least $1-\delta/2$.
In particular, this guarantees an $\left(1-\delta\right)$ approximation
to the original revenue.\end{claim}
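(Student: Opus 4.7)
The plan is to apply Chebyshev's inequality at the level of the original bundles contained in each newly formed bundle. Fix a newly formed bundle $B^*$ in the final state, let $S$ index the original bundles sitting inside $B^*$, and write $X_i=\sum_{j\in B_i}v_j$ for the value of the $i$-th such original bundle, with $p_i,\pi_i$ its original selling probability and price. A short induction on the recursion shows that the claimed price telescopes: $\pi^*=\sum_{i\in S}p_i\pi_i$, since the rule $\pi'=\sum_j p_j\pi_j$ is additive in the underlying originals---a previously combined sub-bundle contributes $p_j=1$ times its already-telescoped claimed price. Also, $\{X_i\}_{i\in S}$ are mutually independent because the $B_i$ involve disjoint items from a product distribution.

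For each $i\in S$ set $Y_i=\min(X_i,\pi_i)$. Then $\E[Y_i]\geq p_i\pi_i$ (since $Y_i=\pi_i$ whenever $X_i\geq\pi_i$), and $Y_i^2\leq\pi_i Y_i$ gives $\Var(Y_i)\leq\pi_i\E[Y_i]$. Summing, $\E[\sum Y_i]\geq\pi^*$ and $\Var(\sum Y_i)\leq\pi_{\max}\cdot\E[\sum Y_i]$, where $\pi_{\max}=\max_{i\in S}\pi_i$; the bound $p_i\geq\epsilon$ further yields $\E[\sum Y_i]\leq\sum\pi_i\leq\pi^*/\epsilon$.

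The key geometric step---and the only place the recursion structure is really used---is bounding $\pi_{\max}$ in terms of $\pi^*$. Let $[\epsilon\pi^{\mathrm{top}},\pi^{\mathrm{top}}]$ be the interval in which the $k$ sub-bundles lie at the combining step that creates $B^*$. Each such sub-bundle has claimed probability at least $\epsilon$ (equal to $1$ if previously combined, at least $\epsilon$ if still an original) and claimed price at least $\epsilon\pi^{\mathrm{top}}$, so $\pi^*\geq k\epsilon^2\pi^{\mathrm{top}}$. The same telescoping identity applied to each sub-bundle shows that its claimed price---at most $\pi^{\mathrm{top}}$---equals $\sum_{i'}p_{i'}\pi_{i'}$ over the originals it contains, so $p_i\pi_i\leq\pi^{\mathrm{top}}$ for every $i\in S$, and hence $\pi_i\leq\pi^{\mathrm{top}}/\epsilon\leq\pi^*/(k\epsilon^3)$.

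Chebyshev then gives $\Var(\sum Y_i)\leq(\pi^*)^2/(k\epsilon^4)$ and therefore $\Pr[\sum Y_i<(1-\delta/2)\pi^*]\leq 4/(k\epsilon^4\delta^2)$, which is at most $\delta/2$ for $k=8\epsilon^{-4}\delta^{-3}$. Since $X_{B^*}=\sum X_i\geq\sum Y_i$, the bundle $B^*$ clears its discounted price $(1-\delta/2)\pi^*$ with probability at least $1-\delta/2$, so its expected realized revenue is at least $(1-\delta/2)^2\pi^*\geq(1-\delta)\pi^*$. The main obstacle I anticipated was this $\pi_{\max}$ bound: a naive level-by-level induction loses a factor $(1-\delta/2)$ per level and would compound across the potentially non-constant depth of the recursion, so it is important that the argument above references only $B^*$'s creating combining step and controls the variance directly in terms of the original bundles.
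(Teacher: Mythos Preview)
Your proof is correct and follows essentially the same route as the paper: truncate each original bundle's value at its price, then apply Chebyshev using that the last combining step involved at least $k$ sub-bundles with $p_j\pi_j\in[\epsilon^2\pi^{\mathrm{top}},\pi^{\mathrm{top}}]$. Your explicit telescoping identity $\pi^*=\sum_{i\in S}p_i\pi_i$ and the bound $\pi_{\max}\leq\pi^*/(k\epsilon^3)$ make the passage from the original-bundle variance $\sum_i\pi_i^2$ to the final estimate more transparent than the paper's version, which jumps from a sum over original bundles to a sum over $i\in[k]$ without spelling out that step.
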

\begin{proof}
Let $B'=\bigcup B_{i}$ be any newly formed bundle, where $B_{i}$'s,
$\pi_{i}$'s, and $p_{i}$'s are the original bundles, prices and
probabilities (i.e. $B'$ may denote a union of union of bundles).
Let $v_{B_{i}}$ denote the random variable $v_{B_{i}}\triangleq\sum_{j\in B_{i}}v_{j}$;
let also $\hat{v}_{B_{i}}\triangleq\min\left\{ v_{B_{i}},\pi_{i}\right\} $.
Then we have 
\[
\E\left[\sum_{i}\hat{v}_{B_{i}}\right]\geq\sum_{i}p_{i}\cdot\pi_{i}\,\,\,\,\,\,\mbox{and}\,\,\,\,\,\:\Var\left[\sum_{i}\hat{v}_{B_{i}}\right]\leq\sum_{i}\pi_{i}^{2}.
\]

Applying Chebyshev's inequality, 
\[
\Pr\left[\sum_{i}\hat{v}_{B_{i}}\leq\left(1-\delta/2\right)\sum_{i}p_{i}\cdot\pi_{i}\right]\leq\frac{\sum\pi_{i}^{2}}{\left(\frac{\delta}{2}\cdot\sum_{i}p_{i}\cdot\pi_{i}\right)^{2}}.
\]
In the last union that formed $B'$, we combined at least $k$ bundles,
each with $p_{i}\pi_{i}\in\left[\epsilon^{2}\pi,\pi\right]$. Therefore,
\[
\frac{\sum_{i\in\left[k\right]}\pi_{i}^{2}}{\left(\frac{\delta}{2}\cdot\sum_{i\in\left[k\right]}p_{i}\cdot\pi_{i}\right)^{2}}\leq\frac{\sum_{i\in\left[k\right]}\pi^{2}}{\left(\frac{\delta}{2}\cdot\sum_{i\in\left[k\right]}\epsilon^{2}\pi\right)^{2}}\leq\frac{4}{\delta^{2}k\epsilon^{4}}.
\]
Plugging in $k=8\epsilon^{-4}\delta^{-3}\geq-\ln\left(\delta/2\right)\cdot(4\epsilon^{-4}\delta^{-2})$
guarantees that we sell the grand bundle with probability at least
$1-\delta/2$.\end{proof}
\begin{claim}
\label{cla:constant-range}For any $\epsilon,\delta>0$, and let $\eta=\delta^{4}\epsilon^{5}\left(1-\epsilon\right)$.
Let bundles $B_{1},\dots,B_{m}$, have optimal prices $\pi_{1},\dots,\pi_{m}$,
and denote $\pi^{*}=\max_{i\in\left[m\right]}\pi_{i}$. Suppose that
bundle $B_{i}$ sells for price $\pi_{i}$ with probability $p_{i}\geq\epsilon$,
for every $i\in\left[m\right]$. Suppose further that in each range
$\left[\epsilon\pi,\pi\right]$ of prices we have at most $k=\epsilon^{-4}\delta^{-3}$
bundles (this is wlog by the previous claim). Then a $\left(1-\delta\right)$-fraction
of the revenue can be obtained by selling only the bundles with with
price $\pi_{i}\geq\eta\pi^{*}$.\end{claim}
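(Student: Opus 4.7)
The plan is to show that the revenue contributed by low-priced bundles (those with $\pi_i < \eta \pi^*$) is small compared to the total revenue, which we will lower-bound using the single top-priced bundle.

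First, I would establish a lower bound on the baseline revenue. Since the bundle $B^*$ achieving the maximum price $\pi^*$ sells with probability at least $\epsilon$, its contribution alone is at least $\epsilon \pi^*$. Therefore
\[
\sum_{i \in [m]} p_i \pi_i \;\geq\; \epsilon \pi^*.
\]
Any fraction-loss bound that shows the discarded revenue is at most $\delta \epsilon \pi^*$ will imply the desired $(1-\delta)$-approximation.

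Next, I would partition the bundles by price into multiplicative geometric ranges. For each integer $j \geq 1$, let
\[
R_j \;=\; \left\{ i : \pi_i \in \bigl[\epsilon^{j} \pi^*,\; \epsilon^{j-1} \pi^*\bigr) \right\}.
\]
By hypothesis, $|R_j| \leq k = \epsilon^{-4} \delta^{-3}$ for every $j$. Each bundle in $R_j$ contributes revenue $p_i \pi_i \leq \pi_i < \epsilon^{j-1} \pi^*$. Hence the revenue from bundles in $R_j$ is at most $k \epsilon^{j-1} \pi^*$. Letting $J$ be the smallest integer with $\epsilon^{J-1} \pi^* \leq \eta \pi^*$, i.e., the smallest $J$ such that every bundle with price $< \eta \pi^*$ lies in $\bigcup_{j \geq J} R_j$, I would sum the geometric series:
\[
\sum_{j \geq J} |R_j| \cdot \epsilon^{j-1} \pi^* \;\leq\; k \pi^* \sum_{j \geq J} \epsilon^{j-1} \;\leq\; \frac{k \pi^* \epsilon^{J-1}}{1 - \epsilon} \;\leq\; \frac{k \eta \pi^*}{1 - \epsilon}.
\]

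Finally, I would plug in the chosen parameters to verify the constants match. We need $k \eta / (1-\epsilon) \leq \delta \epsilon$, so that the total discarded revenue is at most $\delta \epsilon \pi^* \leq \delta \cdot \sum_i p_i \pi_i$. With $k = \epsilon^{-4} \delta^{-3}$ and $\eta = \delta^{4} \epsilon^{5} (1-\epsilon)$, this reduces to the identity $\eta = \delta \epsilon (1-\epsilon) / k$, which holds by direct substitution. This establishes the claim. The argument is essentially arithmetic once the geometric partition is set up, and no real obstacle arises; the only thing to be careful about is that the baseline revenue bound $\epsilon \pi^*$ (rather than a stronger bound involving all bundles) is what forces the specific dependence $\eta = \Theta(\delta^4 \epsilon^5)$.
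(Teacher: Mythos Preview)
Your proposal is correct and follows essentially the same approach as the paper: bound the discarded revenue by a geometric sum over multiplicative intervals (at most $k$ bundles per interval, each contributing at most the interval's upper endpoint), and compare to the lower bound $\epsilon\pi^{*}$ on the total revenue coming from the top-priced bundle. The paper's proof is terser---it starts the geometric partition at $\eta\pi^{*}$ rather than at powers of $\epsilon$ times $\pi^{*}$, which avoids the boundary bookkeeping---and leaves the comparison to $\epsilon\pi^{*}$ implicit, but the substance is identical.

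One small glitch: your two descriptions of $J$ are not equivalent. If $J$ is the smallest integer with $\epsilon^{J-1}\le\eta$, then $R_{J-1}$ may still contain bundles with price below $\eta\pi^{*}$ (namely those with $\pi_i/\pi^{*}\in[\epsilon^{J-1},\eta)$), so $\bigcup_{j\ge J}R_j$ need not cover all discarded bundles. This is harmless---either shift the partition to start at $\eta\pi^{*}$ as the paper does, or absorb the extra factor of $1/\epsilon$ into the constants---but the ``i.e.'' as written is not quite right.
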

\begin{proof}
$k$ bundles with prices in interval $\left[\epsilon\pi,\pi\right]$
can yield at most $k\pi$ revenue. Summing over $\pi\in\left\{ \eta\pi^{*},\epsilon\cdot\eta\pi^{*},\epsilon^{2}\cdot\eta\pi^{*},\dots\right\} $,
we have that all those bundles together yield revenue at most $k\eta\pi^{*}/\left(1-\epsilon\right)$.
Plugging in $\eta=\delta^{4}\epsilon^{5}\left(1-\epsilon\right)$
completes the proof of the claim.
\end{proof}

\subsection{\label{sub:Discretization}Discretization}

In this section we consider a sequence of manipulations on the distribution
of each item's valuations. At the end of the manipulation, every item
will fit in one of $O\left(\log n\right)$ buckets, with all the items
in each bucket having indistinguishable distributions. The first step
is to discretize the valuation distributions:
\begin{defn}
\label{def:rounding}Let $\vec{D}\triangleq\bigtimes D_{i}$ be a
valuation distribution over non-negative reals ($\mathbb{R}_{+}$).
Let ${\cal N}_{\epsilon}\triangleq\left\{ 0\right\} \cup\left\{ \dots,\left(1+\epsilon\right)^{-1},1,\left(1+\epsilon\right),\left(1+\epsilon\right)^{2},\dots\right\} $
be a multiplicative-$\left(1+\epsilon\right)$-net over $\mathbb{R}_{+}$.
For each $i$, we construct the {\em rounded valuation distribution}
$D_{i}^{\left(1\right)}$ as follows: (a) round down every valuation
in the support to the nearest smaller (or equal) element in ${\cal N}_{\epsilon}$;
then (b) round down every probability of valuation in the new support
to the nearest smaller (or equal) element in ${\cal N}_{\epsilon}$.
Finally, we let $\vec{D}^{\left(1\right)}\triangleq\bigtimes D_{i}^{\left(1\right)}$.
\end{defn}
The following lemma implies that for sufficiently small $\epsilon>0$,
the loss in revenue from rounding the valuations is negligible.
\begin{lem}
\label{lem:rounding}For any constant $\epsilon>0$, non-negative
product distribution $\vec{D}$, and price $\pi$, we have
\begin{equation}
\Pr_{\vec{v}^{\left(1\right)}\sim\vec{D}^{\left(1\right)}}\left[\sum v_{i}^{\left(1\right)}\geq\left(1-\delta\right)\pi\right]\geq\left(1-\delta\right)\Pr_{\vec{v}\sim\vec{D}}\left[\sum v_{i}\geq\pi\right],\label{eq:rounding}
\end{equation}
where $\delta\triangleq2\epsilon^{1/3}$.\end{lem}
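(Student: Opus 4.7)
The plan is to couple $\vec v\sim\vec D$ and $\vec v^{(1)}\sim\vec D^{(1)}$ and show that, pointwise in any realization $\vec v$ with $\sum v_i\geq\pi$, the conditional probability that $\sum v_i^{(1)}\geq(1-\delta)\pi$ is at least $1-\delta$; integrating then yields~(\ref{eq:rounding}). The coupling mirrors Definition~\ref{def:rounding} in two stages. First, the value-rounding map is deterministic: set $v_i^{(0.5)}:=\max\{w\in\mathcal{N}_\epsilon:w\leq v_i\}\in[v_i/(1+\epsilon),v_i]$, so pointwise $\sum v_i^{(0.5)}\geq(1-\epsilon)\sum v_i\geq(1-\epsilon)\pi$. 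Second, probability rounding is randomized: independently for each $i$, flip $Y_i\in\{0,1\}$ with success probability $q_i:=\tilde P_i(v_i^{(0.5)})/P_i(v_i^{(0.5)})\in[1/(1+\epsilon),1]$, where $P_i,\tilde P_i$ are the probabilities of the realized support point under $D_i^{(0.5)}$ and $D_i^{(1)}$, and set $v_i^{(1)}:=v_i^{(0.5)}Y_i$. The marginal of $v_i^{(1)}$ matches $D_i^{(1)}$ (routing the probability-rounding deficit to~$0$), and we have both $q_i\geq 1-\epsilon$ and $q_i(1-q_i)\leq\epsilon$.

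Next I case-split on $\vec v^{(0.5)}$. In Case~A, some $v_j^{(0.5)}\geq(1-\delta)\pi$; then the event $Y_j=1$ alone (probability $\geq 1-\epsilon\geq 1-\delta$) already gives $\sum v_i^{(1)}\geq v_j^{(0.5)}\geq(1-\delta)\pi$. In Case~B, every $v_i^{(0.5)}<(1-\delta)\pi$, and since $\sum v_i^{(0.5)}\geq(1-\epsilon)\pi$ I greedily pick a prefix $T$ whose partial sum $S_T:=\sum_{i\in T}v_i^{(0.5)}$ lies in $[(1-\epsilon)\pi,2\pi)$ (the upper bound because overshoot costs at most one item, i.e.\ at most $(1-\delta)\pi$). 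Writing $Z_T:=\sum_{i\in T}v_i^{(0.5)}Y_i$, independence of the $Y_i$ gives
\[
\E[Z_T\mid\vec v]\geq(1-\epsilon)S_T\geq(1-\epsilon)^2\pi,\qquad\Var[Z_T\mid\vec v]\leq\epsilon\sum_{i\in T}(v_i^{(0.5)})^2\leq\epsilon(1-\delta)\pi\,S_T<2\epsilon\pi^2.
\]
The downward deviation required for failure is $\E[Z_T\mid\vec v]-(1-\delta)\pi\geq(\delta-2\epsilon+\epsilon^2)\pi\geq(\delta/2)\pi$ (for small enough $\epsilon$), so Chebyshev yields
\[
\Pr[Z_T<(1-\delta)\pi\mid\vec v]\leq\frac{2\epsilon\pi^2}{(\delta\pi/2)^2}=\frac{8\epsilon}{\delta^2}=\delta,
\]
where the last equality uses $\delta=2\epsilon^{1/3}$. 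Either way $\Pr[\sum v_i^{(1)}\geq(1-\delta)\pi\mid\vec v]\geq 1-\delta$ whenever $\sum v_i\geq\pi$, which gives~(\ref{eq:rounding}).

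The main obstacle will be tuning the constants so that the Chebyshev bound lands at exactly $\delta$ rather than a larger multiple of $\epsilon^{1/3}$. The two crucial design choices are (i)~placing the Case~A/B threshold at $(1-\delta)\pi$, which caps each $v_i^{(0.5)}$ in Case~B and thereby bounds $\sum_T(v_i^{(0.5)})^2$ by $(1-\delta)\pi\cdot S_T$, and (ii)~truncating to a prefix with $S_T<2\pi$ rather than working with $\sum_{[n]}v_i^{(0.5)}$, whose variance would grow with $\sum v_i$ and blow up the bound. Together these pin $\Var[Z_T]\leq 2\epsilon\pi^2$, which combined with the slack $(\delta/2)\pi$ matches $8\epsilon/\delta^2=\delta$ exactly; any looser truncation or threshold breaks the target.
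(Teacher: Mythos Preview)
Your proof is correct and uses the same coupling-then-Chebyshev idea as the paper. The paper's argument is shorter, though: rather than case-splitting on a large item and truncating to a prefix $T$ with $S_T<2\pi$, it simply compares to the threshold $(1-\delta)\pi_{\vec v}$, where $\pi_{\vec v}:=\sum_i v_i\geq\pi$, and bounds the conditional variance by $\epsilon\,\pi_{\vec v}^2$ via $\sum_i (v_i^{(0.5)})^2\le\bigl(\sum_i v_i^{(0.5)}\bigr)^2\le\pi_{\vec v}^2$. Both the variance and the required deviation then scale with $\pi_{\vec v}$, so Chebyshev gives the uniform bound $4\epsilon/\delta^2=\delta/2$ with no case analysis. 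In other words, your stated concern that working with all of $[n]$ makes the variance ``grow with $\sum v_i$ and blow up the bound'' is unfounded once the target is moved from $(1-\delta)\pi$ to $(1-\delta)\pi_{\vec v}$; the growth cancels, and the truncation and Case~A/B split become unnecessary.
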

\begin{proof}
Rounding the valuations to ${\cal N}_{\epsilon}$ can decrease the
sum by a factor of at most $\left(1+\epsilon\right)$. Rounding the
probabilities is slightly trickier. An equivalent way of formulating
the rounded valuation distribution is to sample $\vec{v}\sim\vec{D}$,
round down the valuation of each item to ${\cal N}_{\epsilon}$, and
then zero the valuation of each item independently with probability
at most $\epsilon$. Inequality (\ref{eq:rounding}) now follows from
\[
\Pr_{\vec{v}\sim\vec{D}}\left[\sum v_{i}^{\left(1\right)}\geq\left(1-\delta\right)\pi\mid\sum v_{i}\geq\pi\right]\geq1-\delta.
\]
In particular, it suffices to show that for every $\vec{v}$ such
that $\sum v_{i}\geq\pi$, 
\begin{equation}
\Pr_{\vec{v}^{\left(1\right)}}\left[\sum v_{i}^{\left(1\right)}\geq\left(1-\delta\right)\pi\mid\vec{v}\right]\geq1-\delta,\label{eq:roudning2}
\end{equation}
 where the randomness is only over the independent zeroing of each
valuation.

Fix any such $\vec{v}$ and let $\pi_{\vec{v}}\triangleq\sum v_{i}\geq\pi$.
The expectation of the sum is at least $\E\left[\sum v_{i}^{\left(1\right)}\right]\geq\left(1-\epsilon\right)^{2}\pi_{\vec{v}}$,
and the variance is at most $\epsilon\cdot\left(1-\epsilon\right)^{2}\pi_{\vec{v}}^{2}$.
Therefore by Chebyshev's inequality, 
\[
\Pr\left[\sum v_{i}^{\left(1\right)}\leq\left(1-\delta\right)\pi_{\vec{v}}\right]\leq\frac{\epsilon}{\left(\delta/2\right)^{2}}.
\]
Plugging in $\delta=2\epsilon^{1/3}$ completes yields (\ref{eq:roudning2}).
\end{proof}
Recall that we can assume wlog that all our bundles sell with constant
probability for prices in $\left[\epsilon\pi,\pi\right]$ (Lemma \ref{lem:bundles}).
Thus, for the purpose of (approximately) evaluating an item's contribution
to any bundle it suffices to consider only its valuations in $\left[\pi/n^{2},\pi\right]$,
rounding down larger valuations to $\pi$ and ignoring smaller. Notice
that the new support has size at most logarithmic: $\left|{\cal N}_{\epsilon}\cap\left[\pi/n^{2},\pi\right]\right|=O\left(\log n\right)$.
Similarly, for the purpose of bundling, we can assume wlog that every
valuation in the support has probability at least $1/n^{2}$. Now
each value in the support is associated with one of $\left|{\cal N}_{\epsilon}\cap\left[1/n^{2},1\right]\right|=O\left(\log n\right)$
potential probabilities. 

In order to represent each item we need to know one more number -
the revenue it can generate when sold separately. Here again we can
assume wlog that this revenue is in $\left[\pi/n^{2},\pi\right]$:
if it is less than $\pi/n^{2}$, its revenue is negligible and we
never want to sell this item separately; if it is greater than $\pi$,
we always want to sell this item separately. The revenue from selling
an item separately is a product of two numbers (price and probability)
in ${\cal N}_{\epsilon}$, and therefore also belongs to ${\cal N}_{\epsilon}$.
As before, this means that we can assume wlog that the expected revenue
takes one of $\left|{\cal N}_{\epsilon}\cap\left[\pi/n^{2},\pi\right]\right|=O\left(\log n\right)$
values.

So far for each item we need $O\left(\log n\right)$ numbers, each
from a set of size $O\left(\log n\right)$. While this is much more
succinct than the naive representation, it is still not good enough
for our algorithmic application (at this point we still need $\left(\log n\right)^{O\left(\log n\right)}\gg n$
buckets). In the next two steps we reduce to only three numbers from
sets of size $O\left(\log n\right)$: first, we show that in the lower
end of the support it suffices to keep the aggregate expectation rather
than probability of each value; second, we argue that we can assume
wlog that all the high values in the support have approximately the
same probability; and the third number is the expected revenue from
selling separately.

\subsubsection*{Low values}

Fix any $\epsilon>0$, distribution $D_{i}^{\left(1\right)}$ over
$\mathbb{R}_{+}$, and $\pi\in\mathbb{R}_{+}$. Define 
\[
v_{i}^{\left(2\right)}\triangleq\begin{cases}
v_{i}^{\left(1\right)} & v_{i}^{\left(1\right)}\geq\epsilon\pi\\
\E_{u\sim D_{i}^{\left(1\right)}}\left[u\mid u<\epsilon\pi\right] & v_{i}^{\left(1\right)}<\epsilon\pi
\end{cases};
\]
round down the new value and probability to the nearest smaller elements
in ${\cal N}_{\epsilon}$, and let $\vec{D}_{i}^{\left(2\right)}$
denote the resulting new distribution.

Intuitively, it may be helpful to think of $\vec{D}^{\left(2\right)}$
as ``erasing'' or ``blurring'' the information about $\vec{D}^{\left(1\right)}$
below $\epsilon\pi$. Notice that for any $\epsilon>0$, $\E_{v_{i}^{\left(2\right)}\sim D_{i}^{\left(2\right)}}\left[v_{i}^{\left(2\right)}\right]=\E_{v_{i}^{\left(1\right)}\sim D_{i}^{\left(1\right)}}\left[v_{i}^{\left(1\right)}\right]$.
The next lemma shows that $\vec{D}^{\left(2\right)}$ also generates
approximately the same revenue. 
\begin{lem}
\label{lem:low-values}Let $\epsilon>0$, let $\delta=2\epsilon^{1/3}$,
and let $\vec{D}^{\left(1\right)}$ be a product distribution. Then,
\[
\Pr_{\vec{v}^{\left(1\right)}\sim\vec{D}^{\left(1\right)}}\left[\sum v_{i}^{\left(1\right)}\geq\pi\right]-2\delta\leq\Pr_{\vec{v}^{\left(2\right)}\sim\vec{D}^{\left(2\right)}}\left[\sum v_{i}^{\left(2\right)}\geq\left(1-\delta\right)\pi\right]-\delta\leq\Pr_{\vec{v}^{\left(1\right)}\sim\vec{D}^{\left(1\right)}}\left[\sum v_{i}^{\left(1\right)}\geq\left(1-2\delta\right)\pi\right].
\]

\end{lem}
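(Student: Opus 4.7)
The plan is to construct an explicit coupling between $\vec{v}^{(1)} \sim \vec{D}^{(1)}$ and $\vec{v}^{(2)} \sim \vec{D}^{(2)}$, and then control the difference $\sum v_i^{(1)} - \sum v_i^{(2)}$ by Chebyshev's inequality. Specifically, sample $\vec{v}^{(1)} \sim \vec{D}^{(1)}$ and, for each coordinate, set $v_i^{(2)} := v_i^{(1)}$ whenever $v_i^{(1)} \geq \epsilon\pi$ and otherwise set $v_i^{(2)} := \mu_i^{*} := \E_{u \sim D_i^{(1)}}[u \mid u < \epsilon\pi]$. Because $D_i^{(2)}$ by definition replaces all mass below $\epsilon\pi$ with a point mass at $\mu_i^{*}$, this gives the correct marginal; the subsequent rounding of values and probabilities to $\mathcal{N}_\epsilon$ each costs at most a multiplicative factor $(1+\epsilon)$, a loss I will absorb into the slack $\delta = 2\epsilon^{1/3}$ at the end (exactly as in the proof of Lemma \ref{lem:rounding}).

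Write $S^{(k)} := \sum_i v_i^{(k)}$ and $\Delta := S^{(1)} - S^{(2)} = \sum_i X_i$, where $X_i := (v_i^{(1)} - \mu_i^{*}) \cdot \mathbf{1}[v_i^{(1)} < \epsilon\pi]$. By the defining property of conditional expectation each $X_i$ has mean zero, so $\E[\Delta] = 0$. Using the elementary bound $\E[Y^2] \leq M \cdot \E[Y]$ for $Y \in [0,M]$, each $\Var[X_i] \leq \E[(v_i^{(1)})^2 \mathbf{1}[v_i^{(1)} < \epsilon\pi]] \leq \epsilon\pi \cdot \E[v_i^{(1)} \mathbf{1}[v_i^{(1)} < \epsilon\pi]]$, and by independence of the items $\Var[\Delta] \leq \epsilon\pi \cdot T$, where $T := \sum_i \E[v_i^{(1)} \mathbf{1}[v_i^{(1)} < \epsilon\pi]]$ is the total expected ``low'' contribution.

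The remainder of the proof is a case split on the size of $T$. In the regime $T \leq \pi$, Chebyshev gives $\Pr[|\Delta| > \delta\pi] \leq \epsilon T/(\delta^2 \pi) \leq \epsilon/\delta^2 \leq \delta$ (for $\delta = 2\epsilon^{1/3}$), and under the coupling the event $\{S^{(1)} \geq \pi\}$ is contained in $\{S^{(2)} \geq (1-\delta)\pi\} \cup \{|\Delta| > \delta\pi\}$, yielding the left-hand inequality; the right-hand inequality follows symmetrically from $\{S^{(2)} \geq (1-\delta)\pi\} \subseteq \{S^{(1)} \geq (1-2\delta)\pi\} \cup \{|\Delta| > \delta\pi\}$. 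In the regime $T > \pi$, I argue instead that the low parts themselves already exceed $(1-\delta)\pi$ with high probability: letting $M := \sum_i v_i^{(1)} \mathbf{1}[v_i^{(1)} < \epsilon\pi]$ and $N := \sum_i \mu_i^{*} \mathbf{1}[v_i^{(1)} < \epsilon\pi]$, both have mean $T$ and variance at most $\epsilon\pi T$, so Chebyshev yields $\Pr[M < (1-\delta)T] \leq \epsilon\pi/(\delta^2 T) \leq \delta$ and similarly for $N$; hence both $S^{(1)} \geq M \geq (1-\delta)\pi$ and $S^{(2)} \geq N \geq (1-\delta)\pi$ hold with probability $\geq 1-\delta$, and both inequalities in the lemma are trivially satisfied.

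The main obstacle is not the Chebyshev computation itself, which is routine, but rather making the case split seamless at the boundary $T \approx \pi$ and folding in the multiplicative $(1+\epsilon)$ loss from rounding onto $\mathcal{N}_\epsilon$ in the definition of $\vec{D}^{(2)}$. I would handle this by verifying $\epsilon/\delta^2 = \epsilon^{1/3}/4 \leq \delta/2$ and then showing that the $(1+\epsilon)$ factor shrinks $(1-\delta)\pi$ to at most $(1-2\delta)\pi$ on the right-hand side, so the same inequalities go through after substituting the rounded distribution for the idealized one used in the coupling.
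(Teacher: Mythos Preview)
Your approach is correct and is essentially the paper's: couple $\vec{v}^{(1)}$ and $\vec{v}^{(2)}$ so they agree above $\epsilon\pi$, then use Chebyshev's inequality to bound the fluctuation of the low-value contribution about its mean. The paper's proof is a one-line sketch of exactly this argument; your case split on $T$ versus $\pi$ and your handling of the $\mathcal{N}_\epsilon$-rounding are details the paper simply glosses over.
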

Recall that by Lemma \ref{lem:singletons}, we can assume wlog that
all bundles sell with probability $\Omega\left(1\right)$; thus we
can tolerate the above additive loss in probability.
\begin{proof}
By Chebyshev's inequality, the sum of valuations less than $\epsilon\pi$
is within an additive $\pm\delta\pi$ of its expectation with probability
at least $1-\delta$ (and the rest of the valuations don't change
at all).
\end{proof}

\subsubsection*{High values}

For any $\epsilon>0$, a rounded distribution $D_{i}^{\left(2\right)}$
(in the sense of Definition \ref{def:rounding}), and $\pi\in\mathbb{R}_{+}$,
let $p_{i}^{*}\triangleq\max_{v\in\left[\epsilon\pi,\pi\right]}\Pr_{u\sim D_{i}^{\left(2\right)}}\left[u=v\right]$
denote the most likely valuation in $\left[\epsilon\pi,\pi\right]$.
Let $L_{D_{i}^{\left(2\right)}}^{\epsilon}$ denote the set of unlikely
high valuations: 
\[
L_{D_{i}^{\left(2\right)}}^{\epsilon}\triangleq\left\{ v\colon\Pr_{u\sim D_{i}^{\left(2\right)}}\left[u=v\right]<\epsilon^{4}p_{i}^{*}\right\} \cap\left[\epsilon\pi,\pi\right]
\]
Let $D_{i}^{\left(3\right)}$ denote the restriction of $D_{i}^{\left(2\right)}$
to $\supp\left\{ D_{i}^{\left(2\right)}\right\} \setminus L_{D_{i}^{\left(2\right)}}^{\epsilon}$:
\[
v_{i}^{\left(3\right)}\triangleq\begin{cases}
v_{i}^{\left(2\right)} & v_{i}^{\left(2\right)}\notin L_{D_{i}^{\left(2\right)}}^{\epsilon}\\
0 & v_{i}^{\left(2\right)}\in L_{D_{i}^{\left(2\right)}}^{\epsilon}
\end{cases}.
\]

The following lemma implies that for each $i$ it suffices to maintain
$p_{i}^{*}$ (which always takes one of $\left|{\cal N}_{\epsilon}\cap\left[1/n^{2},1\right]\right|=O\left(\log n\right)$
values), and a constant number of bits for each of the (constantly
many) values in $\supp\left\{ D_{i}^{\left(3\right)}\right\} \cap\left[\epsilon\pi,\pi\right]$.
\begin{lem}
\label{lem:high-values}Let $\vec{D}^{\left(2\right)}$ be a rounded
product distribution of valuations of items in $B$. Assume that selling
bundle $B$ for price $\pi$ yields higher expected revenue than selling
all the items in $B$ separately. Then,
\[
\Pr_{\vec{v}^{\left(3\right)}\sim\vec{D}^{\left(3\right)}}\left[\sum_{i\in B}v_{i}^{\left(3\right)}\geq\pi\right]\geq\left(1-\delta\right)\Pr_{\vec{v}^{\left(2\right)}\sim\vec{D}^{\left(2\right)}}\left[\sum_{i\in B}v_{i}^{\left(2\right)}\geq\pi\right].
\]
\end{lem}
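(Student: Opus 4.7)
The plan is to show that the ``unlikely high value'' events $\{v_i^{(2)} \in L^\epsilon_{D_i^{(2)}}\}$ contribute only a $\delta$-fraction of the mass of the event $\{\sum v_i^{(2)} \geq \pi\}$. On the complementary event, the definition of $v_i^{(3)}$ forces $v_i^{(3)} = v_i^{(2)}$ for every $i$, so no mass is lost; union-bounding the rare events then yields the lemma.

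First, I would exploit the hypothesis (bundle beats separate) to bound $\sum_{i \in B} p_i^*$. For each item $i$, let $v_i^* \in [\epsilon\pi,\pi]$ realize the mode probability $p_i^*$. Selling item $i$ alone at price $v_i^*$ yields expected revenue at least $v_i^* \cdot p_i^* \geq \epsilon \pi \cdot p_i^*$. Summing over $i \in B$, the separate-selling revenue for $B$ is at least $\epsilon\pi \sum_i p_i^*$, while the bundle revenue equals $\pi \Pr[\sum v_i^{(2)} \geq \pi]$. The hypothesis therefore gives
\[
\sum_{i \in B} p_i^* \;\leq\; \frac{1}{\epsilon}\,\Pr\!\left[\sum v_i^{(2)} \geq \pi\right].
\]

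Second, I would use a union bound over items and rare-high values. Since ${\cal N}_\epsilon \cap [\epsilon\pi, \pi]$ has cardinality $\ell_\epsilon = O(\epsilon^{-1}\log(1/\epsilon))$ and each $v \in L^\epsilon_{D_i^{(2)}}$ has $\Pr[v_i^{(2)} = v] < \epsilon^4 p_i^*$, a union bound yields
\[
\Pr\!\left[\exists i : v_i^{(2)} \in L^\epsilon_{D_i^{(2)}}\right] \;\leq\; \ell_\epsilon \epsilon^4 \sum_{i \in B} p_i^* \;\leq\; \ell_\epsilon \epsilon^3 \,\Pr\!\left[\sum v_i^{(2)} \geq \pi\right].
\]
On the complementary event, $v_i^{(3)} = v_i^{(2)}$ for every $i$ by the definition of $\vec{D}^{(3)}$, so
\[
\Pr\!\left[\sum v_i^{(3)} \geq \pi\right] \;\geq\; (1 - \ell_\epsilon \epsilon^3)\,\Pr\!\left[\sum v_i^{(2)} \geq \pi\right],
\]
which is at least $(1-\delta)\Pr[\sum v_i^{(2)} \geq \pi]$ for $\delta = 2\epsilon^{1/3}$ and sufficiently small $\epsilon$, matching the convention from Lemmata \ref{lem:rounding} and \ref{lem:low-values}.

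The main obstacle is the first step: without the hypothesis that bundling beats separate pricing, there is no way to upper bound $\sum_i p_i^*$ by $\Pr[\sum v_i^{(2)} \geq \pi]$, and the lemma would genuinely fail (consider distributions with many items whose modes are individually profitable while the bundle rarely reaches $\pi$). Once this mode-vs-bundle revenue comparison is in place, the rest is a routine union bound and parameter chase against $\ell_\epsilon$.
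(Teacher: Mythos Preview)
Your proposal is correct and essentially identical to the paper's argument: both lower-bound each item's separate-selling revenue by $p_i^*\epsilon\pi$, union-bound the probability of any zeroing event by $|{\cal N}_\epsilon\cap[\epsilon\pi,\pi]|\cdot\epsilon^4\sum_i p_i^*$, and then invoke the bundle-beats-separate hypothesis to turn this into a multiplicative loss in the sale probability. The only cosmetic difference is that the paper compares the loss to the separate revenue item-by-item before summing, whereas you first aggregate to bound $\sum_i p_i^*$; the arithmetic (and the resulting $(1-O(\epsilon))\geq(1-\delta)$ factor) is the same.
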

\begin{proof}
We compare, for each item $i$, the potential revenue loss from switching
from $D_{i}^{\left(2\right)}$ to $D_{i}^{\left(3\right)}$ to the
expected revenue from selling $i$ separately. We argue that for each
$i$, the latter is much larger. Summing over all items, we have that
the total loss is much smaller than the total revenue from selling
every item separately. By the premise, the latter is less than the
original revenue from selling the bundle.

By definition, item $i$ has probability at least $p_{i}^{*}$ of
having value $\epsilon\pi$. Thus we can obtain revenue at least $p_{i}^{*}\cdot\epsilon\pi$
from selling it for price $\epsilon\pi$. In contrast, every time
we zero a valuation, we could potentially lose revenue $\pi$. The
total probability over items in $L_{D_{i}^{\left(2\right)}}^{\epsilon}$
is $\left|L_{D_{i}^{\left(2\right)}}^{\epsilon}\right|\epsilon^{4}p_{i}^{*}$.
Between $\epsilon\pi$ and $\pi$, there are at most $-\log\epsilon/\log\left(1+\epsilon\right)<1/\epsilon^{2}$
elements in ${\cal N}_{\epsilon}$; in particular, $\left|L_{D_{i}^{\left(2\right)}}^{\epsilon}\right|<1/\epsilon^{2}$.
Therefore, the revenue lost is at most an $\epsilon$-fraction of
the revenue from selling $i$ separately. 
\end{proof}

\subsubsection*{From QPTAS to PTAS}

We have reduced the representation of each item to three numbers in
${\cal N_{\epsilon}}\cap\left[\pi/n^{2},\pi\right]$: the expectation
over lower values, $\E_{u\sim D_{i}^{\left(3\right)}}\left[u\mid u<\epsilon\pi\right]$
(rounded down to ${\cal N}_{\epsilon}$); the quantity $p_{i}^{*}\cdot\pi$,
where $p_{i}^{*}\triangleq\max_{v\in\left[\epsilon\pi,\pi\right]}\Pr_{u\sim D_{i}^{\left(3\right)}}\left[u=v\right]$
is the maximum probabilities over higher values; and the maximum revenue
from selling $i$ separately, $\srev\left(D_{i}\right)$. (For the
higher values we also need a constant number of bits to specify which
values have probabilities close to $p_{i}^{*}$, and how close.) At
this point we need $O\left(\log^{3}n\right)$ buckets, which would
suffice for obtaining a Quasi-PTAS.

Our final step is to observe that if any of those three numbers is
much lower than the maximum of the three, it might as well be zero.
If $\E_{u\sim D_{i}}\left[u\mid u<\epsilon\pi\right]$ is much higher
than $p_{i}^{*}\cdot\pi$, then it's contribution to revenue for any
bundle outweighs the contribution from any of the higher values appearing
with very low probability; similarly if $\E_{u\sim D_{i}}\left[u\mid u<\epsilon\pi\right]\gg\srev\left(D_{i}\right)$
we would always sell item $i$ as part of one of the bundles. If $\srev\left(D_{i}^{\epsilon}\right)$
is much higher than either of the other two, then the contribution
from selling item $i$ separately outweighs the contribution (from
the lower values, higher values, or both) to the revenue from any
bundle. Finally, since $\srev\left(D_{i}\right)\geq p_{i}^{*}\epsilon\pi$,
the revenue from selling separately is never much lower than the higher
values' contribution.

\subsection{Algorithm\label{sub:Algorithm}}

We achieve a $\left(1-\delta\right)$-approximation of the optimal
partition revenue for some constant $\delta>0$; let $\epsilon=\epsilon\left(\delta\right)>0$
be a sufficiently small constant, and let 
\[
{\cal N}_{\epsilon}\triangleq\left\{ 0\right\} \cup\left\{ \dots,\left(1+\epsilon\right)^{-1},1,\left(1+\epsilon\right),\left(1+\epsilon\right)^{2},\dots\right\} .
\]
For each item $i$, compute the optimum expected revenue from selling
$i$ separately, $\srev\left(D_{i}\right)$.

Before we analyze bundles, we first want to guess a range $\left[\epsilon\pi,\pi\right]$
in which all the bundle prices will lie. Let $v^{\min}$ denote the
minimum over all nonzero values in the support of all items, and let
$v^{\max}$ denote the sum, over all items, of the maximal values
in their supports. An optimal $\pi$ must belong to $\left[v^{\min},v^{\max}\right]$.
Enumerate over all potential $\pi$'s in ${\cal N}_{\epsilon}$. By
Lemma \ref{lem:bundles} for some choice of $\pi$, it suffices to
optimize only over partitions with a constant number of non-trivial
bundles, and each of those bundles sells for prices in $\left[\epsilon\pi,\pi\right]$
with probability at least $\epsilon$. For the rest of the algorithm
assume we have such an optimal choice of $\pi$. 

For each $i$, round $D_{i}$ as in Definition \ref{def:rounding},
and let $\vec{D}^{\left(1\right)}$ denote the resulting product distribution.
($\vec{D}^{\left(1\right)}$ is stochastically dominated by $\vec{D}$,
thus the revenue obtained from a partition mechanism with valuations
drawn from  $\vec{D}^{\left(1\right)}$ is at most the revenue obtained
with the same partition and pricing with valuations drawn from $\vec{D}$.
In the other direction, Lemma \ref{lem:rounding} guarantees that
the revenue lost is at most a small constant fraction.)

For each $i$, replace all values in $\left[0,\epsilon^{2}\pi\right]$
with their expectation, and let $\vec{D}^{\left(2\right)}$ denote
the resulting product distribution. (By Lemma \ref{lem:low-values},
optimizing over bundles with prices in $\left[\epsilon\pi,\pi\right]$
with valuations drawn from $\vec{D}^{\left(2\right)}$ is the same
up to a small constant factor as with valuations drawn from $\vec{D}^{\left(1\right)}$.)

For each $i$, let $p_{i}^{*}$ denote the maximal probability $D_{i}^{\left(2\right)}$
gives to any value in $\left[\epsilon^{2}\pi,\pi\right]$. Remove
values with probabilities much smaller than $p_{i}^{*}$ from the
support of $D_{i}^{\left(2\right)}$, and let $\vec{D}^{\left(3\right)}$
denote the resulting product distribution. ($\vec{D}^{\left(3\right)}$
is stochastically dominated by $\vec{D}^{\left(2\right)}$, thus the
revenue obtained from a partition mechanism with valuations drawn
from $\vec{D}^{\left(3\right)}$is at most the revenue obtained with
the same partition and pricing with valuations drawn from $\vec{D}^{\left(2\right)}$.
In the other direction, Lemma \ref{lem:high-values} guarantees that
the revenue lost is at most a small constant fraction.)

For each $i$, we now have three variables which may be of different
scale: $\srev\left(D_{i}\right)$, $\E_{u\sim D_{i}^{\left(3\right)}}\left[u\mid u<\epsilon^{2}\pi\right]$,
and $p_{i}^{*}\cdot\pi$; we also have the full description of $D_{i}^{\left(3\right)}$
restricted to $\left[\epsilon^{2}\pi,\pi\right]$, which given $p_{i}^{*}$
requires only a constant number of bits. If any of the three variables
is much smaller than any of the others, set the smaller variable to
zero. 

We now represent each item with a constant number of variables,  which
are all either zero or within constant factors. In total, we have
at most $O\left(\log n\right)$ distinct representations, which we
henceforth call {\em buckets}.

Enumerate over the number of bundles (by Lemma \ref{lem:bundles}
it suffices to consider only numbers up to some constant $\ell$).
For each bucket, we must decide how many items to allocate to each
of the $\ell$ bundles, and which to sell separately. I.e. we must
pick some vector in $\left[0,1\right]^{\ell+1}$, and up to $\pm\epsilon$,
there are at most $\epsilon^{-\left(\ell+1\right)}$ different vectors;
in particular, only a constant number of choices for each bucket.
Enumerate (in polynomial time) over all choices for all $O\left(\log n\right)$
buckets.

\bibliographystyle{plain}
\bibliography{bib}

\end{document}